\newtheorem{theorem}{Theorem}[section]
\def\BibTeX{{\rm B\kern-.05em{\sc i\kern-.025em b}\kern-.08em
    T\kern-.1667em\lower.7ex\hbox{E}\kern-.125emX}}
\begin{document}

\title{LEP-CNN: A Lightweight Edge Device Assisted Privacy-preserving CNN Inference Solution for IoT}

\author{\IEEEauthorblockN{ Yifan Tian and Jiawei Yuan}
tiany1@my.erau.edu, yuanj@erau.edu
\IEEEauthorblockA{\textit{Department of ECSSE}} 
\textit{Embry-Riddle Aeronautical University, USA}
\and
\IEEEauthorblockN{Shucheng Yu}
Shucheng.Yu@stevens.edu
\IEEEauthorblockA{\textit{Department of ECE}} 
\textit{Stevens Institute of Technology, USA}
\and
\IEEEauthorblockN{Yantian Hou}
yantianhou@boisestate.edu
\IEEEauthorblockA{\textit{Department of ECSSE}} 
\textit{Boise State University, USA}
}

\maketitle

\begin{abstract}
Supporting convolutional neural network (CNN) inference on resource-constrained IoT (Internet of Things) devices in a timely manner has been an outstanding challenge for emerging smart systems. To mitigate the burden on IoT devices, one prevalent solution is to offload the CNN inference task, which is usually composed of billions of operations, to public cloud. However, the ``offloading-to-cloud'' solution may cause privacy breach while moving sensitive data to cloud. For privacy protection, the research community has resorted to advanced cryptographic primitives (e.g., homomorphic encryption) and approximation techniques to support CNN inference on encrypted data. Consequently, these attempts cause impractical computational overhead on IoT devices and degrade the performance of CNNs. Another concern of ``offloading-to-cloud'' besides the privacy issue is the integrity of data. In order to avoid the heavy computation resulted by the offloaded tasks, public cloud can dismiss the inference request by sending back random results to IoT devices. Moreover, relying on the remote cloud can cause additional network latency and even make the system dysfunction when network connection is off. 

To address the challenge, we proposes an extremely \underline{l}ightweight \underline{e}dge device assisted \underline{p}rivate CNN inference solution for IoT devices, namely LEP-CNN. The main design of LEP-CNN is based on a novel online/offline encryption scheme. The decryption of LEP-CNN is pre-computed offline via utilizing the linear property of the most time-consuming operations of CNNs. As a result, LEP-CNN allows IoT devices to securely offload over 99\% CNN operations, and edge devices to execute CNN inference on encrypted data as efficient as on plaintext. To prevent edge devices from returning an incorrect result, LEP-CNN provides an integrity check option to enable the IoT devices to detect incorrect results with a successful rate over 99\%. Experiments on AlexNet show that our scheme can speed up the CNN inference for more than $35\times$ for IoT devices with comparable computing power of Raspberry Pi. We also implemented a homomorphic encryption based privacy preserving AlexNet using the well-known CryptoNets scheme and compared it with LEP-CNN. We demonstrated that LEP-CNN has a better performance than homomorphic encryption based privacy preserving neural networks under time-sensitive scenarios.

\end{abstract}

\begin{IEEEkeywords}
Internet of Things, Deep Neural Networks, Convolutional Neural Network, Privacy
\end{IEEEkeywords}

\section{Introduction}
With the recent advances in artificial intelligence, the integration of deep neural networks and IoT is receiving increasing attention from both academia and industry \cite{dl-iot-1,dl-iot-2,dl-iot-5}. As the representative of deep neural networks, convolutional neural network (CNN) has been identified as an emerging technique to enable a spectrum of intelligent IoT applications \cite{dl-iot-survey}, including visual detection, smart security, audio analytics, health monitoring, infrastructure inspection, etc. However, due to the high computational cost introduced by CNNs, their deployment on resource-constrained IoT devices for time-sensitive services becomes very challenging. For example, popular CNN architectures (e.g., AlexNet \cite{AlexNet}, FaceNet \cite{FaceNet}, and ResNet \cite{ResNet}) for visual detection require billions of operations for the execution of a single inference task. Our evaluation results show that an inference task using AlexNet can cost more than two minutes on an IoT device with comparable computing capability as a Raspberry Pi (Model A). To mitigate such a burden for IoT devices, offloading CNN tasks to public cloud has become a popular choice in the literature. However, this type of ``cloud-backed" system may raise privacy concerns \cite{iot-cloud-privacy} by sending sensitive data to remote cloud. Also, the integrity of the returned results cannot be guaranteed when the cloud is ``curious-and-dishonest''. Moreover, connecting to cloud can cause additional latency to the system under network congestion and even make the system dysfunction when network is off \cite{edge-computing}.

In time-sensitive CNN-driven IoT applications, the inference stage of a trained CNN is executed for deep data analytics. To address the privacy concern with offloading CNNs, researchers have attempted to execute the inference stage over encrypted data \cite{CryptoNets,IACR-PPDL,CryptoDL}. These schemes usually first use approximation strategies to convert non-linear layers in a CNN to linear operations. Then, homomorphic encryption is utilized to enable privacy-preserving execution of these converted operations and other layers in the CNN using cloud computing. Nevertheless, the adoption of homomorphic encryption introduces extremely high encryption cost and/or communication load to the local IoT devices. For example, a quad-core Raspberry Pi, which outperforms most IoT devices in terms of computational capability, can perform only four Paillier homomorphic encryption per second \cite{paillier-benchmark}. Given a single input of AlexNet that has $227\times227\times 3$ elements, it requires more than 10 hours to complete the encryption, which is impractical for most applications in terms of both time delay and energy consumption. (As a comparison, to execute an inference task with the same AlexNet, a single-core Raspberry Pi with our solution can finish all local encryption and decryption within 0.3 second as shown in Section \ref{s:evaluation}.) Besides the high computational cost, the utilization of approximation in existing research also results in accuracy loss to some extent. Furthermore, these research adopt batch processing to improve their performance, which is more suitable for the ``Data Collection and Post-Processing'' routine. Differently, on-the-fly processing is desired for IoT devices to fulfill time-sensitive tasks. Another line of related research utilizes differential privacy to achieve privacy-preserving offloading of the training stage of CNNs \cite{CCS16,ICDM17-PHAN}. These research control the amount of information leaked from each individual record in the training dataset. However,  differential privacy becomes unsuitable for the inference stage, because only a single input is available at this stage.


The other limitation with offloading time-sensitive IoT tasks to cloud is the reliance on the availability of cloud and its network condition. To overcome this limitation, one prevalent solution is to utilize the computing resources at the edge of network. Compared with cloud computing, edge devices are geographically closer to IoT devices and usually within one-hop communication range. Such physical proximity can effectively ameliorate the network latency and availability issue. Most current research \cite{edge-iot-2,edge-iot-3} on edge computing mainly emphasizes on fundamental issues such as resource allocation but assumes the edge devices are fully trusted. Such an assumption, while is probably true for many applications, can be problematic for other applications where IoT devices are mobile or edge devices are shared by ad hoc multiple parties that do not share mutual trust. Examples of such applications include vehicular networks, drones, and mobile edge computing (MEC) \cite{5G-MEC} in general, to name a few. Some research \cite{iot-survey} does address data security and privacy issues in edge computing. However, efficient and private offloading of CNN inference to edge devices remains an open challenge.  

Another issue is that the IoT devices are not able to judge the correctness of the returned results from the cloud. Since the offloaded computation is resource consuming, public clouds may not be willing to allocate expensive computational resources and may tend to cheat the IoT devices by returning random data with the same size of the desired data. According to \cite{kandukuri2009cloud}, in some cases, dishonest cloud service provides may even discard the data to save resources. How to effectively detect such dishonest behaviors while maintaining the lightweight computation on IoT devices and overall performance in time-sensitive CNN inference tasks is another essential challenge to be solved. 

This paper addresses such challenges and proposes an extremely lightweight edge device assisted private CNN inference solution for IoT devices, namely LEP-CNN. LEP-CNN enables IoT devices to securely offload CNN inference tasks to local edge devices. To significantly speed up the offloading process, LEP-CNN adopts a novel online/offline encryption. Specifically, since CNN linear operation over input data and random noise are separable, encryption and decryption can be efficiently computed offline. In practical CNN architectures such as AlexNet and FaceNet, linear operations are dominant due to their vast number. Therefore, it is rewarding to trade offline computation and storage (of random noise) for online computation. As a result, our online/offline encryption allows IoT devices to securely offload over 99\% CNN operations to edge devices. And edge devices are able to execute CNN inference on encrypted data as efficiently as on unencrypted data. In addition, LEP-CNN does not introduce any accuracy loss as compared to CNN inference over unencrypted data. Compared with homomorphic encryption based privacy preserving neural networks \cite{CryptoNets,CryptoDL,IACR-PPDL}, LEP-CNN achieves a better performance under time-sensitive scenarios due to its lightweight property. Furthermore, to detect dishonest behaviors from ``curious-and-dishonest'' edge devices, our scheme provides an integrity check mechanism which helps the IoT devices detect incorrect returned results from edge devices with an over 99\% success rate. Minor computation overhead (1.1\% drop of offload percentage in worst case) is introduced when this integrity check is on. This integrity check can be optionally turned off in such scenario that the cloud is considered ``curious-and-honest'' to save local computational resources. LEP-CNN can be customized to support flexible CNN architectures that fulfill the requirements of different applications.


Extensive experimental evaluation shows the efficiency, scalability, accuracy and integrity of LEP-CNN. We implemented a prototype over well-known ImageNet \cite{imagenet_cvpr09} dataset using an uncompressed AlexNet architecture, which involves 2.27 billion operations for each inference result and has comparable complexity with these widely adopted architectures, e.g., FaceNet (1.6 billion operations) and Results (3.6 billion operations). The experimental results show that LEP-CNN can securely offload $99.95\%$ computation from the IoT devices for AlexNet. As a result, we are able to speed up $35.63\times$ for the execution of CNN-driven IoT AlexNet tasks using a single laptop as the edge device. Meanwhile, LEP-CNN saves over $95.56\%$ energy consumption compared with fully executing an AlexNet request on the IoT device. We also deployed LEP-CNN in a "curious-and-dishonest" scenario. With the integrity check feature being turned on, our scheme can still maintain a high computation offloading rate of $99.33\%$ from the IoT devices and a $30.00\times$ speedup. In addition, our scheme keeps the high speedup rate as the complexity of CNN layers increases. Therefore, it is promising to be scaled up for more complex CNN architectures. 

The rest parts of this paper are organized as follows: In Section \ref{s:cnn}, we introduce the background of CNN. Section \ref{s:construction} presents the detailed construction of our scheme. We state security analysis and numerical analysis in Section \ref{s:analysis}. We further evaluate the practical performance of our scheme with a prototype evaluation in Section \ref{s:evaluation}. We review and discuss related works in Section \ref{s:related-work} and conclude this paper in Section \ref{s:conclusion}. 


\section{Background of Convolutional Neural Network}\label{s:cnn}

A CNN contains a stack of layers that transform input data to outputs with label scores. There are four types of most common layers in CNN architectures, including: \emph{Convolutional Layers, Pooling Layers, Activation Layers}, and \emph{Fully-connected Layers}.


Convolutional layers extract features from input data. Fig.\ref{f:con-layer} depicts an example of convolutional layer that has an input data of size $n\times n\times D$ and $H$ kernels, each of size $k\times k \times D$. The input will be processed into all $H$ kernels independently to extract $H$ different features. Considering the input and each kernel as $D$ levels, where each level of the input and kernel are a $n\times n$ matrix and a $k\times k$ matrix respectively. Each level of a kernel starts scanning the corresponding input level from top-left corner, and then moves from left to right with $s$ elements, where $s$ is the stride of the convolutional layer. Once the top-right corner is reached, the kernel moves $s$ elements downward and scans from left to right again. This convolution process is repeated until the kernel reaches the bottom-right corner of this input level. For each scan, an output is computed using the dot product between the scanned window of input and the kernel as an example shown in Fig.\ref{f:con-layer}. For each kernel, the output for all $D$ levels will be summed together. 



\begin{figure}[ht]
\centering
\includegraphics[width=8cm]{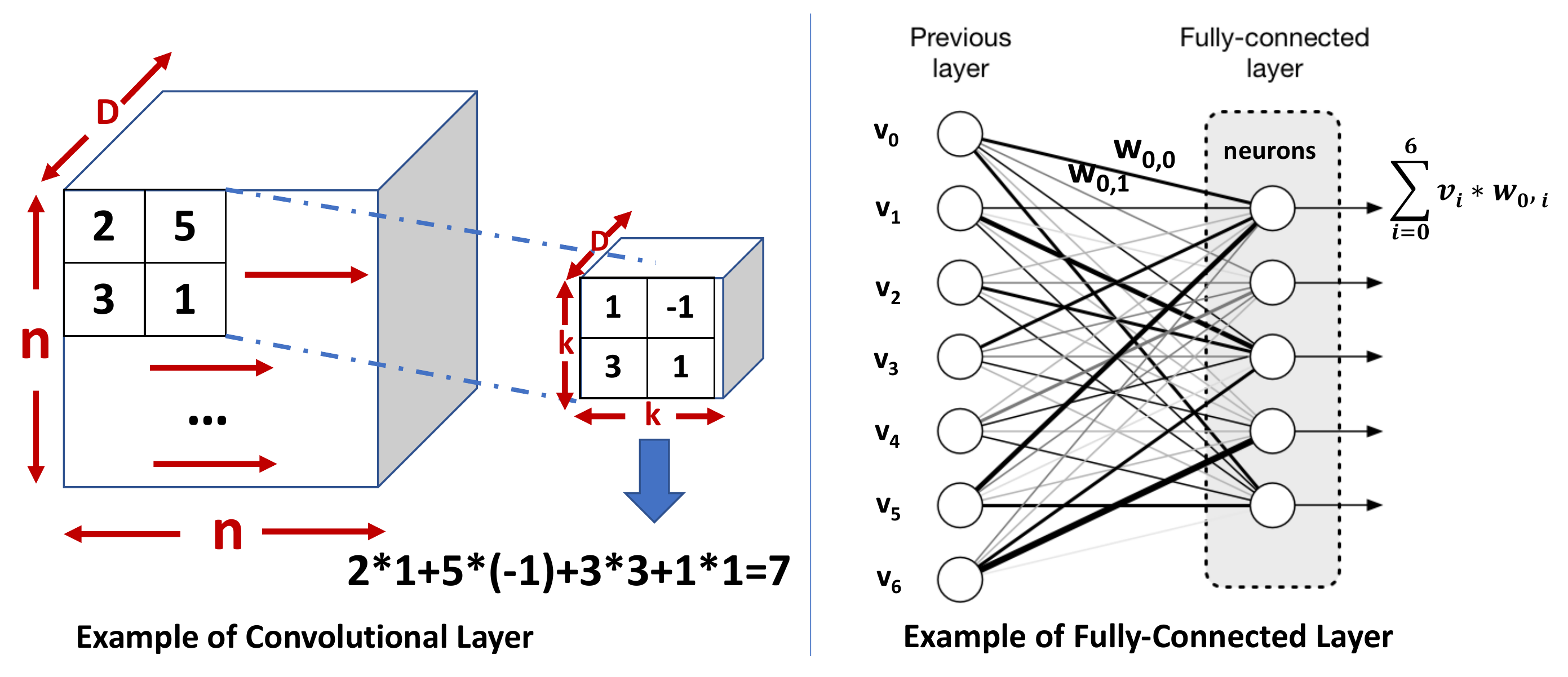}
\centering
\caption{Examples of a Convolutional Layer and a Fully-connected Layer} \label{f:con-layer}
\end{figure}

Pooling layers and activation layers are usually non-linear layers. A pooling layer is periodically inserted between convolutional layers. Pooling layers progressively reduce the spatial size of outputs from convolutional layers, and thus to make data robust against noise and control overfitting. An activation layer utilizes element-wise activation functions to signal distinct identification of their input data. There are a number of popular pooling strategies (e.g., max-pooling and average-pooling) and activation functions (e.g., rectified linear units (ReLUs) and continuous trigger functions), which are extremely computational efficient compared with convolutional layers and fully-connected layers. In our scheme, these two efficient layers will be directly handled on the IoT devices.


Fully-connected layers are usually the final layers of a CNN to output the final results of the network. In case of a fully-connected layer, all neurons in it have full connections to all outputs from the previous layer. As an example shown in Fig.\ref{f:con-layer}, the connection between each neuron and input element has a weight. To obtain the output of a neuron, elements connected to  it will be multiplied with their weights and then accumulated. 

More details about CNN can be found in ref \cite{cnn-wiki}.


\section{Detailed Construction of LEP-CNN}\label{s:construction}
\textbf{\textit{System Model}}: LEP-CNN consists of three major entities: \textit{IoT Devices}, the \textit{Owner of Devices}, and \textit{Edge Devices}. To deploy IoT devices for CNN-driven tasks, the owner first performs the offline phase to generate encryption and decryption keys and stores them into the IoT devices. As each set of encryption and decryption keys will only be used for one CNN request for security purpose, the owner needs to pre-load multiple sets of keys into the IoT devices. In Section \ref{ss:key-update}, we discuss the storage and remote update of keys to ensure the performance of different IoT application scenarios. For trained CNN architectures, the owner stores them using cloud storage or other accessible platforms, which can be retrieved by the edge devices when the IoT device enters the their coverage or pre-requested by the owner. In the online phase, once the IoT device has a piece of data needs to be processed by the CNN, it interacts with the edge device to perform privacy-preserving execution. During this process, the IoT device efficiently offloads expensive convolutional layers and fully-connected layers to the edge device, and only keeps the compute-efficient layers at local. This is motivated by the fact that convolutional and fully-connected layers occupy majority of computation and parameters storage in typical CNNs \cite{Cong2014}.

\textbf{\textit{Threat Model}}: In our system, we consider the edge devices to be ``curious-but-honest'', i.e., the edge devices follow our scheme to correctly conduct the storage, communication, and computational tasks, but try to learn sensitive information in IoT devices' input for the CNN. With our integrity check feature turned on, our system can also detect ``curious-and-dishonest'' edge devices at a high successful rate with minor efficiency trade-off. We assume the IoT devices are fully trusted and will not be compromised. The research on protecting IoT devices is orthogonal to this work. The edge device has access to the trained CNNs and all ciphertexts of inputs and outputs for the offloaded convolutional layers and fully-connected layers in the CNN. This assumption is consistent with majority of existing works that focus on privacy-preserving computation offloading \cite{CryptoNets,IACR-PPDL,CryptoDL}. LEP-CNN focuses on preventing the edge devices from learning the IoT devices' inputs and outputs of each layer in the offloaded CNN. 

There is also a line of research that proposes to infer the inputs of CNN layers from their outputs \cite{output-inference1}. This kind of inference is not applicable to LEP-CNN, since the edge device only has access to encrypted outputs from each layer.

We now present the detailed construction of LEP-CNN. Important notation is summarized in Table \ref{t:notations}.

\begin{table}[ht]
\caption{Summary of Notations}\label{t:notations}
\vspace{-0.2cm}
\begin{center}
\begin{tabular}{ |c |l |}
  \hline
   \multirow{2}{*}{$n\times n$}& the size of each level of a convolutional layer's\\   
   &input\\   \hline
   \multirow{2}{*}{$D$}&the depth of the input of a convolutional layer \\  
   &and kernel\\ \hline
   $k\times k$& the size of a convolutional layer's kernel matrix\\   \hline
   $H$& the number of kernels of a convolutional layer\\   \hline
   $s$& the size of stride used for a convolutional layer\\   \hline
   $p$& the size of padding used for a convolutional layer\\   \hline
   $\mathcal{R}_{c,d}$&$n\times n$ random matrices to encrypt the input of a \\
   $1\leq d\leq D$& convolutional layer  \\ \hline
  $\alpha_i$&$(\frac{n-k+2p}{s}+1)\times (\frac{n-k+2p}{s}+1)$ matrices to decrypt\\ 
   $1\leq i\leq H$& a convolutional layer's output from $H$ kernels\\\hline
   $m$& the size of a fully-connected layer input vector\\   \hline
   $T$& the number of neurons of a fully-connected layer\\   \hline
   \multirow{2}{*}{$\mathcal{R}_f$}&a $m$-dimensional random vector to encrypt the  \\ 
   & input of a fully-connected layer\\ \hline
   \multirow{2}{*}{$\beta$}&a $T$-dimensional decryption vector to decrypt  \\
   &fully-connected layer outputs \\ \hline
   $r$& the sample rate of returned data of a convolutional layer \\ \hline
   $\theta$& the error rate of returned data of a convolutional layer \\ \hline
\end{tabular}
\end{center}
\vspace{-0.7cm}
\end{table}

\subsection{Offline Phase}
In the offline phase, the owner generates encryption and decryption keys for all convolutional layers and fully-connected layers in a trained CNN. In LEP-CNN, we consider each element in the input data of convolutional layers and fully-connected layers is $\gamma$-bit long, and $\lambda$ is the security parameter. To ensure the security $\frac{1}{2^{\lambda-\gamma-1}}$ shall be a negligible value in terms of computational secrecy \cite{Crpto-book-cp3.3}, e.g., $<\frac{1}{2^{128}}$. Detailed selection of security parameter is discussed in Section \ref{ss:security-analysis}.



As described in Algorithm.\ref{a:off-covl}, given a convolutional layer with a $n\times n\times D$ input, stride as $s$, padding as $p$, and $H$ kernels ($k\times k$ matrices), the owner generates $\{\mathcal{R}_{c,d},1\leq d\leq D\}$ as the encryption keys and $\{\alpha_{i},1\leq i\leq H\}$ as its decryption keys, where $\mathcal{R}_{c,d}$ is a $n\times n$ random matrix and $\alpha_{i}$ is a $(\frac{n-k+2p}{s}+1)\times (\frac{n-k+2p}{s}+1)$ matrix. For expression simplicity, we use $\textbf{Conv}(\mathcal{R}_{c,d},i_{th})$ to denote the convolution operation for the $i_{th}$ kernel with $\mathcal{R}_{c,d}$ as input. 

Given a fully-connected layer with a $m$-dimensional vector as input and $T$ neurons, the owner first generates a $m$-dimensional random vector $\mathcal{R}_{f}$. Then, the owner takes $\mathcal{R}_{f}$ as the input of the fully-connected layer to output a $T$-dimensional vector $\beta$. $\mathcal{R}_f$ and $ \beta$ are set as the encryption key and decryption key respectively for this layer.

For a CNN with $x$ convolutional layers and $y$ fully-connected layers, $x$ sets of $\{\mathcal{R}_{c,d},\alpha_{i}\}_{1\leq d\leq D_x, 1\leq i\leq H_x}$ and $y$ sets of $\{\mathcal{R}_f, \beta\}$ are generated by the owner as a final set of keys $\{Enc_{key},Dec_{key}\}$. \textbf{\textit{Note that}}, \textit{each set of keys is only valid for one CNN request in the later online phase}. Thus, the owner will generate multiple sets of keys according to the necessity of application scenarios as discussed in Section \ref{ss:key-update}.

\begin{algorithm}[th]
\small
  \SetKwInOut{Input}{Input}
  \SetKwInOut{Output}{Output}

    \Input{Input size $n\times n \times D$, stride $s$, padding $p$, $H$ kernels}
    \Output{Encryption keys $\mathcal{R}_{c,d},1\leq d\leq D$, Decryption keys $\alpha_{i},1\leq i\leq H$}
     
    Generate random  $n\times n$ matrices $\mathcal{R}_{c,d},1\leq d\leq D$ \;
    \For{$1\leq i\leq H$}{
    	\For{$1\leq d\leq D$}{
      Take $\mathcal{R}_{c,d}$ as input for the $i_{th}$ kernel for convolution and output $\textbf{Conv}(\mathcal{R}_{c,d},i_{th})$\;
      d++;
      }
      Set $\alpha_{i}=\sum_{d=1}^D \textbf{Conv}(\mathcal{R}_{c,d},i_{th})$\;
      $i++$\;
  }
 \caption{Offline Preparation of Convolutional Layer}\label{a:off-covl}
\end{algorithm}

\begin{algorithm}[ht]
\small
  \SetKwInOut{Input}{Input}
  \SetKwInOut{Output}{Output}

    \Input{Input Data \& Trained CNN}
    \Output{CNN Execution Result}

    \textbf{Set} the Layer Input $\mathcal{M}$ = Input Data\;
    \textbf{Set} Layers = the collection of all Convolutional Layers and Fully-connected Layers in CNN\;
    \textbf{Set} Layer = the first Layer from Layers\;
    \While{Layer is not null}{
      \If{Layer $=$ Convolutional Layer} {Execute the \textbf{PPCL} with $\mathcal{M}$ as input.\;
      Set $\mathcal{M}$ = output from \textbf{PPCL}\;
      }

      \If{Layer $=$ Fully-connected Layer}
      {Execute the \textbf{PPFL} with $\mathcal{M}$ as input.

      Set $\mathcal{M}$ = output from \textbf{PPFL}\;
      }
      \textbf{Set} Layer = Layers.next()\;
  }
  \Return $\mathcal{M}$ as result\;
 \caption{Online Privacy-preserving CNN}\label{a:online-cnn}
\end{algorithm}

\subsection{Online Phase}
During the online phase, the IoT device can efficiently interact with the edge device to process data using CNN in a privacy-preserving manner. The overall process of our online phase is depicted in Algorithm.\ref{a:online-cnn}. Specifically, the IoT device offloads encrypted data to the edge device for performing compute-intensive convolutional layers and fully-connected layers. Intermediated results are returned back to the IoT device for decryption. Then, these decrypted results are processed with the follow up activation layer and pooling layer (if exist). Outputs are encrypted and offloaded again if the next layer is a convolutional layer or a fully-connected layer. This procedure is conducted iteratively until all CNN layers are executed. 

To fulfill these tasks, we designed two privacy-preserving schemes \textbf{PPCL} and \textbf{PPFL} for convolutional layers and fully-connected layers respectively. 

\subsubsection{PPCL: Privacy-preserving Convolutional Layer}\label{sss:ppcl}
In \emph{PPCL}, we consider a general convolutional layer with a $n\times n\times D$ input, stride as $s$, padding as $p$, and $H$ kernels with each size of $k\times k\times D$. The $d_{th}$ level of the input is denoted as a $m\times m$ matrix $\mathcal{I}_d$.

\textbf{Input Encryption}: The IoT device encrypts the input using the pre-stored keys $\{\mathcal{R}_{c,d}\}$ for this convolutional layer as 
\begin{eqnarray}\label{e:ppcl-enc}
Enc(\mathcal{I}_d)=\mathcal{I}_d+\mathcal{R}_{c,d}
\end{eqnarray} 
where $\{Enc(\mathcal{I}_d)\},1\leq d\leq D$ are sent to the edge device.  

\textbf{Privacy-preserving Execution}: The edge device takes each $Enc(\mathcal{I}_d),1\leq d\leq D$ as the input of kernels to perform the convolution process. For the $i_{th}$ kernel, the edge device outputs  
\begin{eqnarray}
&&\sum_{d=1}^D \textbf{Conv}(Enc(\mathcal{I}_d),i_{th})\\
&&=\sum_{d=1}^D \textbf{Conv}(\mathcal{I}_d,i_{th})+\sum_{d=1}^D \textbf{Conv}(\mathcal{R}_{c,d},i_{th})\nonumber
\end{eqnarray}
$\sum_{d=1}^D \textbf{Conv}(Enc(\mathcal{I}_d),i_{th}),1\leq i\leq H$ are returned back to the IoT device as intermediate results. 




\textbf{Decryption and Preparation for the Next Layer}: Given the returned $\sum_{d=1}^D \textbf{Conv}(Enc(\mathcal{I}_d),i_{th}),1\leq i\leq H$, the IoT device quickly decrypts them as 
\begin{eqnarray}
\sum_{d=1}^D \textbf{Conv}(Enc(\mathcal{I}_d),i_{th})-\alpha_i=\sum_{d=1}^D \textbf{Conv}(\mathcal{I}_d,i_{th})
\end{eqnarray}
where $\{\alpha_i=\sum_{d=1}^D \textbf{Conv}(\mathcal{R}_{c,d},i_{th})\}, 1\leq i\leq H$ are the pre-stored decryption keys for this layer. Afterwards, the IoT device performs the activation layer and pooling layer directly over convolutional output, which are extremely compute-efficient. For example, one of the most popular activation layer ReLU only requires translating negative values in the input to 0. The popular max-pooling (or average-pooling) layer simply shrinks the data by taking the max value (or average value respectively) every few values. The output will be encrypted and sent to the edge device using \textit{PPCL} for the next convolutional layer (or \textit{PPFL} respectively for a fully-connected layer). 


\subsubsection{PPFL: Privacy-preserving Fully-connected Layer}

In \emph{PPFL}, we consider a general fully-connected layer with $T$ neurons and takes a $m$-dimensional vector $\mathcal{V}$ as input.

\textbf{Input Encryption}: Given the input, the IoT device encrypts it using the pre-stored encryption key $\mathcal{R}_f$ for this layer as 
\begin{eqnarray}\label{e:ppfl-enc}
Enc(\mathcal{V})=\mathcal{V}+\mathcal{R}_f
\end{eqnarray} 
$Enc(\mathcal{V})$ is then sent to the edge device. 

\textbf{Privacy-preserving Execution}: On receiving $Enc(\mathcal{V})$, the edge device takes $Enc(\mathcal{V})$ as the input of the fully-connected layer. Specifically, the encrypted outcome $Enc(\mathcal{O}[j]),1\leq j\leq T$ of each neuron is computed as 
\begin{eqnarray}
Enc(\mathcal{O}[j])&=&\sum_{i=1}^m Enc(\mathcal{V})[i]\times w_{i,j}=\mathcal{O}[j]+\beta[j]
\end{eqnarray}
where $w_{i,j}$ is the weight between the $i_{th}$ element of input vector and the $j_{th}$ neuron. $Enc(\mathcal{O})=\{Enc(\mathcal{O}[1]),Enc(\mathcal{O}[2]),\cdots, Enc(\mathcal{O}[T])\}$ is sent back to the IoT device as intermediate results.

\textbf{Decryption and Preparation for the Next Layer}: Given the returned $Enc(\mathcal{O})$, the IoT device decrypts each $Enc(\mathcal{O})$  with the pre-stored decryption key $\beta$ of this layer as
\begin{eqnarray}\label{e:ppfl-dec}
\mathcal{O} = Enc(\mathcal{O}) - \beta
\end{eqnarray}
Then, the IoT device executes the activation layer with $\mathcal{O}$ as input. The output from the activation layer will be encrypted and sent to edge device using \textit{PPFL} if there are any additional fully-connected layers in the CNN.

To this end, the IoT device is able to efficiently handle each layer in a CNN. Compute-intensive convolutional and fully-connected layers are securely offloaded to the edge using \textit{PPCL} and \textit{PPFL}. These compute-efficient layers are directly handled by the IoT device. Since we develop \textit{PPCL} and \textit{PPFL} as independent modules, they can be customized and recursively plugged into any CNN no matter how many different convolutional layers and fully-connected layers it contains.

\subsection{Discussion - Storage and Update of Pre-computed Keys}\label{ss:key-update}
LEP-CNN considers two major types of resource-constrained IoT devices that run CNN-driven applications. 
\begin{itemize}
	\item Type-1: Mobile IoT devices with limited battery life and computational capability, such as drones. 
	\item Type-2: Static devices with power supply but has limited computational capability, such as security cameras.
\end{itemize}
The type-1 devices are usually deployed to perform tasks for a period time. Therefore, before each deployment, the device owner can pre-load enough keys to support its CNN tasks. With regards to the type-2 devices, the owner can perform an initial key pre-loading and then use remote update to securely add new offline keys as described in Fig.\ref{f:key-update}.

\begin{figure}[ht]
\centering
\includegraphics[width=8cm]{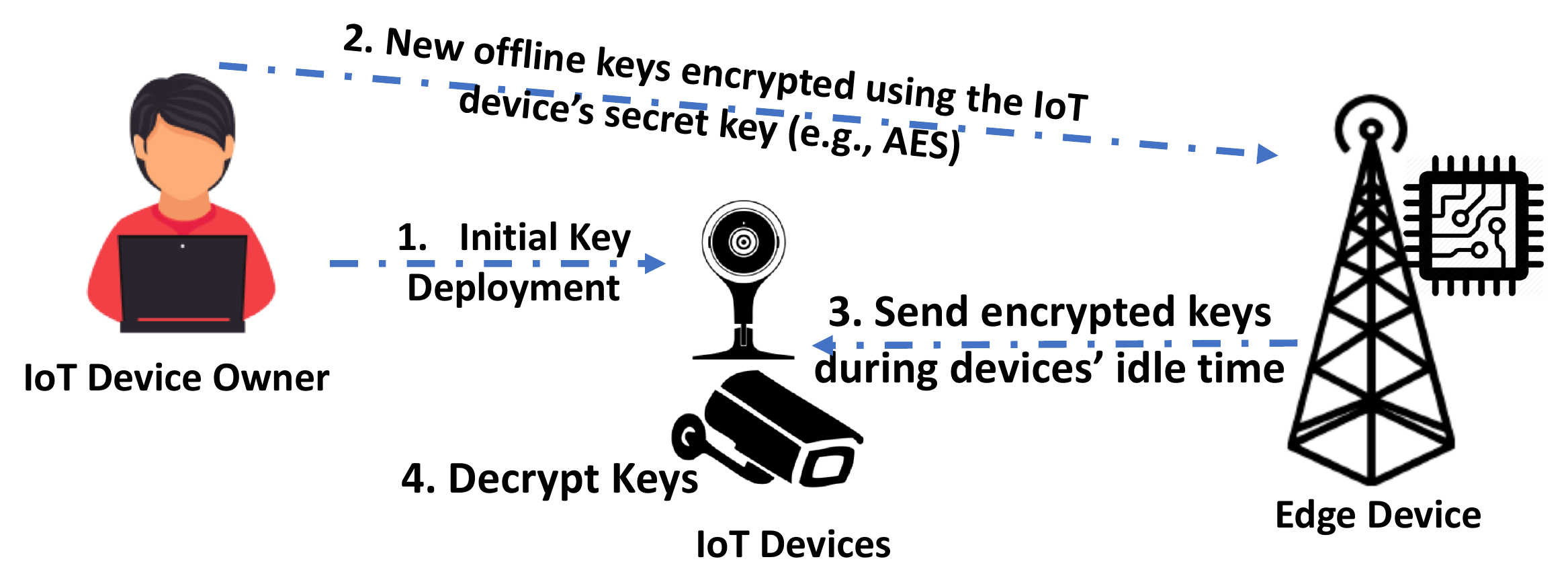}
\centering
\caption{Key Update for Power Connected Devices} \label{f:key-update}
\end{figure}

LEP-CNN proposes to ensure the timely processing of CNN requests when they are needed on IoT devices. Instead of performing real-time CNN requests on every piece of data collected, resource-constrained IoT devices usually require in-depth analytics using CNN when specific signals are detected. Taking real-time search and monitoring using drones as an example application for type-1 devices, fast local processing will be first performed for data collected to get estimated results \cite{uav-cloud}. Once suspicious signs are detected in estimated results, CNN based analytics are further conducted for a small set of data (e.g., video frames with the detected suspicious object). Given the high efficiency of LEP-CNN, the performance of such CNN requests will be timely supported when enough pre-computed keys are available. For example, when the average frequency of CNN requests is every one per ten seconds for a drone, only 360 sets of pre-computed keys are needed for one-hour deployment, which is longer than most current drones' battery life \cite{uav-battery}. Security camera is an example of type-2 devices, which requires CNN-based analytics to extract detailed information only when alarm is triggered by motion or audio sensors of the camera. Similar to the drone case, LEP-CNN can timely support the peak CNN requests when suspicious signs are detected. 

Assuming the average frequency of CNN-required alarm in a security camera is one per 10 minutes, and each alarm requires 5 CNN requests, 720 sets of pre-computed keys are needed for one-day usage. As evaluated in Section \ref{ss:storage}, an IoT device with a 32GB SD card is able to store keys to support 1600 requests for AlexNet. Such a result indicates 4.4 deployments and a 2.22-day support for type-1 and type-2 devices respectively when using AlexNet. 



Note that, LEP-CNN is designed for low-cost resource-constrained devices that require timely processing of moderate (or low) frequent CNN requests. For application scenarios that involve a large number of constant CNN requests, e.g., security critical surveillance systems, computational powerful devices are suggested to handle CNN requests directly at local.

\subsection{Discussion - Integrity Check on Returned Data}\label{ss:integrity-check}
In a scenario where the edge devices are ``curious-and-dishonest'', those edge devices may perform dishonest behaviors so that they can save their resource utilities. After receiving inference requests from the IoT devices, the edge devices may cheat the IoT devices by skipping the heavy convolutional operations and sending back random, apparently not correct, results to the IoT devices. These incorrect results can badly effect or even completely ruin the final result of the entire CNN inference. In order to ensure the integrity of returned data from edge devices, LEP-CNN also provides an optional integrity check functionality with only a minor efficiency cost. By enabling the integrity check, the IoT device can achieve an error detection rate of $99\%$ while only losing $1.1\%$ in offload percentage in the worst case compared with LEP-CNN disabling integrity check. The users can decide whether to turn on this functionality based on the actual deployment scenario and the trustworthiness of the edge devices.

The basic strategy of integrity check is to first sample a small portion of elements from returned data in each layer and then check whether there is an incorrect result occurring in the selected elements. To validate the correctness of a single element, the IoT device needs to go through the corresponded convolution operations locally. Although resource consuming, with high probability, this validation process can block IoT devices from taking incorrect results into next layer. 

The error detection rate $Pr(ED)$ can be calculated as below:
\begin{eqnarray}
Pr(ED) = 1 - \frac{{(1-\theta)N \choose rN}}{{N \choose rN}}
\end{eqnarray}

where $N$, $r$ and $\theta$ is the size, sample rate and error rate of returned data in a convolutional layer. ${(1-\theta)N \choose rN}$ is the combination operation for selecting $r \times N$ elements from $(1-\theta) \times N$ elements. In order to increase the error detection rate while lowering the additional validation computation on IoT device, we provide detailed evaluation by performing numerical analysis and practical experiments in Section \ref{s:analysis} and Section \ref{s:evaluation} respectively. Based on our experiment results on AlexNet, LEP-CNN with integrity check turned on could achieve over 99\% error detection rate while maintaining a similar computation offload rate compared with LEP-CNN with integrity check turned off.

\section{Analysis of LEP-CNN}\label{s:analysis}

\subsection{Security Analysis}\label{ss:security-analysis}

\begin{theorem}\label{th-1}
Given the ciphertext $\mathcal{C}$ of a $\gamma$-bit random message $\mathcal{M}$ generated using $\textit{PPCL}$ or $\textit{PPFL}$ in LEP-CNN, the probability for a probabilistic polynomial time (PPT) adversary $\mathcal{A}$ to output a correct guess for $\mathcal{M}$ shall have
\begin{eqnarray}
Pr[(\mathcal{M}^*=\mathcal{M})|\mathcal{C}] - Pr[\mathcal{M}^*=\mathcal{M}] \leq \epsilon
\end{eqnarray}
where $\epsilon$ is a negligible value in terms of computational secrecy \cite{Crpto-book-cp3.3}, $\mathcal{M}^*$ is $\mathcal{A}$'s guess for $\mathcal{M}$, and $Pr[\mathcal{M}^*=\mathcal{M}]$ is the probability $\mathcal{A}$ makes a correct without ciphertext. Specifically, the corresponding ciphertext generated using $\textit{PPCL}$ or $\textit{PPFL}$ only introduces negligible additional advantages to $\mathcal{A}$ for making a correct guess of $\mathcal{M}$.
\end{theorem}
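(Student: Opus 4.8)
The plan is to recognize that both \emph{PPCL} and \emph{PPFL} encrypt through the same primitive — additive masking of each plaintext element by a freshly sampled random value (Eq.~\eqref{e:ppcl-enc} and Eq.~\eqref{e:ppfl-enc}) — so it suffices to analyze a single $\gamma$-bit element $\mathcal{M}$ blinded by one $\lambda$-bit uniform mask $\mathcal{R}$, giving $\mathcal{C} = \mathcal{M} + \mathcal{R}$. First I would argue that the extra material available to $\mathcal{A}$ beyond $\mathcal{C}$ carries no new information: the returned convolution and inner-product values are deterministic functions that the edge device itself computes from the ciphertexts, and (since keys are fresh per request) each element is masked by exactly one independent random value, so the whole analysis collapses to the one-time-pad-style statement for a single masked element.

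Next I would quantify how much $\mathcal{C}$ can help $\mathcal{A}$. The core fact to establish is that the conditional distribution of $\mathcal{C}$ is almost independent of $\mathcal{M}$: for any fixed $c$, $\Pr[\mathcal{C}=c \mid \mathcal{M}=\mu]$ equals $2^{-\lambda}$ when $c-\mu \in [0,2^\lambda)$ and $0$ otherwise. Comparing two candidate plaintexts $\mu_0,\mu_1 \in [0,2^\gamma)$, the two induced ciphertext distributions coincide on their common support and differ only on a boundary region of size $2|\mu_0-\mu_1| \le 2(2^\gamma-1)$, so their statistical distance is at most $(2^\gamma-1)/2^\lambda < 1/2^{\lambda-\gamma}$. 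This ``edge effect'' is precisely why the scheme attains statistical rather than perfect secrecy: the mask space $2^\lambda$ strictly exceeds the message space $2^\gamma$ and the addition is over the integers.

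I would then translate statistical closeness into the guessing advantage asserted in the theorem. Because the family $\{\mathcal{C}\mid\mathcal{M}=\mu\}$ varies, across all admissible $\mu$, by statistical distance at most $1/2^{\lambda-\gamma}$, no strategy (even an unbounded one, hence certainly no PPT $\mathcal{A}$) that inspects $\mathcal{C}$ can raise its probability of outputting $\mathcal{M}$ above the a~priori best guess $\Pr[\mathcal{M}^*=\mathcal{M}]$ by more than this distance. Taking $\epsilon = 1/2^{\lambda-\gamma-1}$, which dominates $1/2^{\lambda-\gamma}$ and is exactly the quantity the scheme's parameter choice forces to be negligible (e.g.\ below $2^{-128}$), yields $\Pr[(\mathcal{M}^*=\mathcal{M})\mid\mathcal{C}] - \Pr[\mathcal{M}^*=\mathcal{M}] \le \epsilon$.

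The step I expect to be the main obstacle is the middle one. Unlike a textbook one-time pad, where a uniform mask over the same space gives perfect secrecy and zero advantage, here the mask is drawn from a strictly larger space and combined additively over the integers, so one cannot claim perfect indistinguishability; carefully accounting for the boundary terms to obtain the clean $1/2^{\lambda-\gamma-1}$ bound, and then rigorously converting that statistical distance into a bound on the adversary's posterior-versus-prior guessing gap, is where the real work lies. The reduction to a single element and the final parameter substitution are, by comparison, routine.
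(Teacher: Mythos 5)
Your proposal is correct in substance but reaches the bound by a genuinely different route than the paper. The paper's proof conditions directly on the ``bad event'' that the mask $R$ lands within $2^\gamma$ of either end of its range $[0,2^\lambda)$: with probability $1-\epsilon$ the ciphertext is consistent with all $2^\gamma$ plaintexts and the posterior guessing probability is exactly $2^{-\gamma}$, while with probability $\epsilon = \Pr[R<2^\gamma \text{ or } R>2^\lambda-2^\gamma] = 2^{-(\lambda-\gamma-1)}$ the adversary's success is trivially bounded by $1$; averaging gives advantage at most $(1-2^{-\gamma})\epsilon < \epsilon$. You instead bound the statistical distance between the ciphertext distributions induced by any two plaintexts and convert that into a guessing advantage. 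Both arguments exploit the same boundary effect and land on the same $\epsilon$, and your version has the merit of making explicit two things the paper leaves implicit: that the edge device's entire view is a deterministic function of the ciphertexts (so the single-element analysis suffices), and that the security is statistical rather than perfect precisely because the mask space strictly contains the message space under integer addition. One step you should tighten: the claim that a pairwise statistical distance of $\delta = 2^{-(\lambda-\gamma)}$ caps the guessing advantage at $\delta$ is the standard fact only for \emph{two} hypotheses; with $2^\gamma$ candidate plaintexts the generic conversion costs an extra factor (the optimal guesser achieves $2^{-\gamma}\sum_{c}\max_{\mu}\Pr[\mathcal{C}=c\mid\mathcal{M}=\mu]$, which the triangle inequality bounds by $2^{-\gamma}+2\delta$ rather than $2^{-\gamma}+\delta$). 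Your final choice $\epsilon = 2^{-(\lambda-\gamma-1)} = 2\delta$ happens to absorb exactly that factor, and a direct computation for this additive structure in fact yields the much stronger bound $2^{-\gamma}\sum_c\max_\mu\Pr[\mathcal{C}=c\mid\mathcal{M}=\mu] = 2^{-\gamma} + (2^\gamma-1)2^{-\lambda-\gamma} < 2^{-\gamma} + 2^{-\lambda}$, so the conclusion stands; but as written that intermediate sentence is not justified and should either invoke the factor-of-two bound explicitly or replace the generic reduction with the direct sum-over-ciphertexts calculation.
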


\begin{proof}
As shown in Eq.\ref{e:ppcl-enc}, each level of the input data $\mathcal{I}_d$ in \textit{PPCL} is encrypted by adding an \textit{one-time} random matrix $\mathcal{R}_{c,d}$. With regards to each $\gamma$-bit element in $\mathcal{I}_d$, it is encrypted by adding a $\lambda$-bit random number $\mathcal{R}_{c,d}[e]$, i.e., $Enc(\mathcal{I}_d[e])=\mathcal{I}_d[e]+\mathcal{R}_{c,d}[e]$. Similarly, each element $\mathcal{V}[e]$ in the input of \textit{PPFL} is encrypted with a random number $\mathcal{R}_{f}[e]$ as $Enc(\mathcal{V}[e])=\mathcal{V}[e]+\mathcal{R}_{f}[e]$. For expression simplicity, we use $\mathcal{M}$ to denote a $\gamma$-bit input element for \textit{PPCL} or \textit{PPFL}, $R$ is the random number to encrypt $\mathcal{M}$ as $\mathcal{C}=\mathcal{M}+R$. Note that, $R$ will be re-generated for each encryption using \textit{PPCL} or \textit{PPFL}. 

To make a correct guess of $\mathcal{M}$ without the ciphertext, the adversary $\mathcal{A}$ has $Pr[\mathcal{M}^*=\mathcal{M}]=\frac{1}{2^\gamma}$, where $\mathcal{M}^*$ is $\mathcal{A}$'s guess for $\mathcal{M}$. 

By given a ciphertext $\mathcal{C}$, there are $2^\gamma$ possible values for its plaintexts $\mathcal{M}$ if $2^\gamma\leq \mathcal{C} \leq 2^\lambda-2^\gamma$, because $\mathcal{C}$ has the same distribution as the random number $R$ \cite{Crpto-book-cp11}. Now, if the random number $R$ used for encryption is the range of $[2^\gamma,2^\lambda-2^\gamma]$, we have $Pr[(\mathcal{M}^*=\mathcal{M})|\mathcal{C}] = Pr[\mathcal{M}^*=\mathcal{M}] = \frac{1}{2^\gamma}$. When $\mathcal{C}<2^\gamma$ or $\mathcal{C}>2^\lambda$, we have $Pr[(\mathcal{M}^*=\mathcal{M})|\mathcal{C}]>1/2^\gamma$. This is because the total possible inputs are reduced to $\mathcal{C}$ or $\mathcal{C}-2^\lambda$ respectively. Fortunately, the probability $Pr[\mathcal{C}<2^\gamma]$ or $Pr[\mathcal{C}>2^\lambda]$ in LEP-CNN is negligible when appropriate security parameter is selected. To be specific, $\mathcal{C}<2^\gamma$ or $\mathcal{C}>2^\lambda$ can appear when $R<2^\gamma$ or $R>2^\lambda-2^{\gamma}$. As $Pr[R<2^\gamma]=Pr[R>2^\lambda-2^{\gamma}]=\frac{2^\gamma}{2^\lambda}$, we have 
\begin{eqnarray}
&&Pr[R<2^\gamma~or~R>2^\lambda-2^{\gamma}] \nonumber\\
&&=Pr[R<2^\gamma]+Pr[R>2^\lambda-2^{\gamma}]=\frac{1}{2^{\lambda-\gamma-1}} \nonumber
\end{eqnarray}
Thus, to guarantee $\frac{1}{2^{\lambda-\gamma-1}}$ is a negligible probability, such as $\frac{1}{2^{128}}$, LEP-CNN can set the security parameter $\lambda$ according to size of input message, i.e., $\lambda-\gamma-1>128$.

We now use $\epsilon=\frac{1}{2^{\lambda-\gamma-1}}$ to denote the negligible probability, and get the probability $Pr[(\mathcal{M}^*=\mathcal{M})|\mathcal{C}]$ as
\begin{eqnarray}
Pr[(\mathcal{M}^*=\mathcal{M})|\mathcal{C}]\leq \frac{1}{2^\gamma}*(1-\epsilon)+1*\epsilon = \frac{1}{2^\gamma}+(1-\frac{1}{2^\gamma})\epsilon \nonumber
\end{eqnarray}
where $\frac{1}{2^\gamma}*(1-\epsilon)$ is the probability for a correct guess for $2^\gamma\leq R\leq 2^\lambda-2^\gamma$, and the ``1'' in $1*\epsilon$ is best probability for a correct guess $\mathcal{A}$ can have when $[R<2^\gamma~or~R>2^\lambda-2^{\gamma}]$. As a result, we get 
\begin{eqnarray}
Pr[(\mathcal{M}^*=\mathcal{M})|\mathcal{C}] - Pr[\mathcal{M}^*=\mathcal{M}]\leq(1-\frac{1}{2^\gamma})\epsilon<\epsilon \nonumber
\end{eqnarray}
Since $\epsilon$ is negligible value, Theorem \ref{th-1} is proved.
\end{proof}

\begin{table*}
\caption{Numerical Analysis Summary}\label{t:na-summ}
\vspace{-0.6cm}
\begin{center}
\begin{tabular}{ | c | c | c | c | c | c | c |}
  \hline
  \multicolumn{7}{|c|}{\textbf{LEP-CNN}} \\ \hline
   & \multirow{2}{*}{\textbf{Input Size}} & \multicolumn{2}{c|}{\textbf{Computation of the IoT (FLOPs)}} & \textbf{Offloaded Cost}& \textbf{Communication}&\multirow{2}{*}{\textbf{Storage Overhead}}\\ \cline{3-4}
   &  & \textbf{Input} & \textbf{Results} &\textbf{to the Edge}&\textbf{Cost}&\\ 
    &  & \textbf{Encryption} & \textbf{Decryption} &\textbf{(FLOPs)}&(\textbf{Elements})&\textbf{(Elements)}\\ \hline
  \textbf{Convolutional}  & $n\times n \times D$ &$Dn^2$   &$H(\frac{n-k+2p}{s}+1)^2$&$2DHk^2(\frac{n-k+2p}{s}+1)^2$ &$Dn^2+H(\frac{n-k+2p}{s}+1)^2$&$Dn^2+$\\ 
   &  & &&&&$H(\frac{n-k+2p}{s}+1)^2$\\ \hline
  \textbf{Fully-connected}& $m$ & $m$  & $T$  & $2mT$ &$m+T$  &$m+T$ \\
   \hline
  \multicolumn{7}{|c|}{\textbf{Offloading using Plaintext without Privacy Protection}} \\ \hline \textbf{Convolutional}  & $n\times n \times D$ &N/A &N/A&$2DHk^2(\frac{n-k+2p}{s}+1)^2$ &$Dn^2+H(\frac{n-k+2p}{s}+1)^2$&N/A\\ 
   \hline
  \textbf{Fully-connected}& $m$ & N/A&N/A & $2mT$ &$m+T$  &N/A \\
   \hline
\end{tabular}
\end{center}
In this table: $s$ is the stride, $p$ is the size of padding,  $H$ is the number of kernels, $k\times k$ is the size of kernels of a convolutional layer; $T$ is the number of neurons of a fully-connected layer. Each element is 20 Bytes.
\vspace{-6mm}
\end{table*}

\begin{table*}
\caption{Numerical Analysis of Integrity Check}\label{t:Numerical-Analysis-Integrity-Check}
\vspace{-0.3cm}
\begin{center}
\begin{tabular}{ | c | c | c | c | c | c | c |}
  \hline
  \multicolumn{7}{|c|}{\textbf{LEP-CNN}} \\ \hline
   & \multicolumn{3}{c|}{\textbf{Computation of the IoT (FLOPs)}} & \textbf{Offloaded Cost}& \textbf{Communication}&\multirow{2}{*}{\textbf{Storage Overhead}}\\ \cline{2-4}
   & \textbf{Input} & \textbf{Results} & \textbf{Results} &\textbf{to the Edge}&\textbf{Cost}&\\ 
    & \textbf{Encryption} & \textbf{Decryption} & \textbf{Validation} &\textbf{(FLOPs)}&(\textbf{Elements})&\textbf{(Elements)}\\ \hline
  
  \textbf{Integrity} & \multirow{2}{*}{$Dn^2$} &$H(\frac{n-k+2p}{s}+1)^2$   &$2Dk^2\times$&$2DHk^2\times$ &$Dn^2+$&$Dn^2+Hk^2+$\\ 
   \textbf{Check}&  & &$\lceil rH(\frac{n-k+2p}{s}+1)^2 \rceil$&$(\frac{n-k+2p}{s}+1)^2$&$H(\frac{n-k+2p}{s}+1)^2$&$H(\frac{n-k+2p}{s}+1)^2$\\ \hline
  \textbf{No Integrity}& \multirow{2}{*}{$Dn^2$} & $H(\frac{n-k+2p}{s}+1)^2$   & $0$  & $2DHk^2\times$ &$Dn^2+$  &$Dn^2+$ \\
   \textbf{Check}&&&&$(\frac{n-k+2p}{s}+1)^2$&$H(\frac{n-k+2p}{s}+1)^2$&$H(\frac{n-k+2p}{s}+1)^2$ \\ 
   \hline
\end{tabular}
\end{center}
In this table: $s$ is the stride, $p$ is the size of padding,  $H$ is the number of kernels, $k\times k$ is the size of kernels of a convolutional layer; $\theta$ is the error rate of the returned data; $r$ is the sample rate of the returned data. Each element is 20 Bytes.
\vspace{-6mm}
\end{table*}

\begin{table*}
\caption{Example Numerical Analysis on AlexNet}\label{t:AlexNet}
\vspace{-0.2cm}
\begin{center}
\begin{tabular}{ |c |c |c | c |c | c |c |c |}
  \hline
  &\textbf{Parameters}&\textbf{Input Size}&\textbf{Computation}&\textbf{Offloaded Cost}&\textbf{Offloaded}&\textbf{Communication}&\textbf{Storage}\\
  &&&\textbf{of the IoT}&&\textbf{Percentage}&\textbf{Cost}&\textbf{Overhead}\\ \hline
  \multirow{2}{*}{\textbf{Conv-1}}&n=227, H=96   &                &                 &                     &         &            &            \\ 
                                  &k=11, s=4     & $227\times227\times 3$     & 444,987 FLOPs   & 210,830,400 FLOPs   & 99.79\% & 8691.15 KB & 8691.15 KB \\ \hline
  \multirow{2}{*}{\textbf{Conv-2}}&n=27, H=256   &             &                 &                     &         &            &            \\ 
                                  &k=5, s=1      & $27\times27\times 96$       & 256,608 FLOPs   & 895,795,200 FLOPs   & 99.97\% & 5011.88 KB & 5011.88 KB \\ \hline
  \multirow{2}{*}{\textbf{Conv-3}}&n=13, H=384   &             &                 &                     &         &            &            \\ 
                                  &k=3, s=1      & $13\times13\times 256$       & 108,160 FLOPs   & 299,040,768 FLOPs   & 99.96\% & 2112.50 KB & 2112.50 KB \\ \hline
  \multirow{2}{*}{\textbf{Conv-4}}&n=13, H=384   &              &                 &                     &         &            &            \\ 
                                  &k=3, s=1      & $13\times13\times 384$       & 129,792 FLOPs   & 448,561,152 FLOPs   & 99.97\% & 2535.00 KB & 2535.00 KB \\ \hline
  \multirow{2}{*}{\textbf{Conv-5}}&n=13, H=256   &            &                 &                     &         &            &            \\ 
                                  &k=3, s=1      & $13\times13\times 384$       & 108,160 FLOPs   & 299,040,768 FLOPs   & 99.96\% & 2112.50 KB & 2112.50 KB \\ \hline
  \textbf{FC-1}                   &m=9216,T=4096& 9216                   & 13,312 FLOPs    & 75,497,472 FLOPs    & 99.98\% & 260.00 KB  & 260.00 KB  \\ 
  \hline
  \textbf{FC-2}                  &m=4096,T=4096& 4096                   & 8,192 FLOPs     & 33,554,432 FLOPs    & 99.98\% & 160.00 KB  & 160.00 KB  \\ 
  \hline
  \textbf{FC-3}                   &m=4096,T=1000& 4096                   & 5,096 FLOPs     & 8,192,000 FLOPs     & 99.94\% & 99.53 KB   & 99.53 KB   \\ 
  \hline
  \textbf{Total Cost}             &N/A           &N/A                     & 1,074,307 FLOPs & 2,270,512,192 FLOPs & 99.95\% & 20.49 MB   & 20.49 MB   \\ \hline
  \multicolumn{8}{|c|}{\textbf{Computation for All Activation and Pooling Layers on the IoT: 650,080 FLOPs and 1,102,176 FLOPs}} \\ \hline
\end{tabular}
\vspace{-5mm}
\end{center}
\end{table*}

\begin{table*}
\caption{Example Comparison with/without Integrity Check}\label{t:comparison-integrity-check}
\vspace{-0.2cm}
\begin{center}
\begin{tabular}{ |c |c |c | c |c | c | c |c | c |}
\hline
& \multicolumn{2}{c|}{\textbf{Computation of the IoT}} & \multicolumn{2}{c|}{\textbf{Offloaded Percentage}} &\multicolumn{2}{c|}{\textbf{Communication Cost}}& \multicolumn{2}{c|}{\textbf{Storage Overhead}} \\ \cline{2-9} 
\multirow{2}{*}& {\textbf{No Integrity}}&\textbf{Integrity}&\textbf{No Integrity}&\textbf{Integrity} &\textbf{No Integrity}&\textbf{Integrity}&\textbf{No Integrity}&\textbf{Integrity} \\
&\textbf{Check}&\textbf{Check}&\textbf{Check}&\textbf{Check}&\textbf{Check}&\textbf{Check}&\textbf{Check}&\textbf{Check} \\ \hline
\textbf{Conv-1} &444,987 FLOPs&866,793 FLOPs  &99.79\%&99.59\% &8691.15 KB&8691.27 KB&8691.15 KB&8918.03 KB \\ \hline
\textbf{Conv-2} &256,608 FLOPs&2,944,608 FLOPs&99.97\%&99.67\% &5011.88 KB&5011.98 KB&5011.88 KB&5136.88 KB \\ \hline
\textbf{Conv-3} &108,160 FLOPs&2,504,320 FLOPs&99.96\%&99.16\% &2112.50 KB&2112.60 KB&2112.50 KB&2180.00 KB \\ \hline
\textbf{Conv-4} &129,792 FLOPs&3,724,032 FLOPs&99.97\%&99.17\% &2535.00 KB&2535.10 KB&2535.00 KB&2602.50 KB \\ \hline
\textbf{Conv-5} &108,160 FLOPs&3,398,272 FLOPs&99.96\%&98.86\% &2112.50 KB&2112.59 KB&2112.50 KB&2157.50 KB \\ \hline

\end{tabular}
\vspace{-5mm}
\end{center}
\end{table*}

\subsection{Numerical Analysis}
The numerical analysis of LEP-CNN is summarized in Table \ref{t:na-summ}. For expression simplicity, we use one floating point operation \textbf{\textit{FLOP}} to denote an addition or a multiplication. For a general convolutional layer, $n\times n\times D$ is the size of input, $s$ is the stride, $p$ is the size of padding, $H$ is the number of kernels, and $k\times k$ is the size of kernel matrix. For a general fully-connected layer, $m$ is the dimension of the input vector, $T$ is the number of neurons. For a pooling layer, $q\times q$ is the size of pooling regions. We use an uncompressed AlexNet \cite{AlexNet} architecture as the study case for analysis, which is a complex CNN architecture that requires 2.27 billion FLOPs for each inference request, which has comparable computing loads as the current prevalent FaceNet (1.6 billion FLOPs) \cite{FaceNet} and ResNet (3.6 billion FLOPs) \cite{ResNet}.

\subsubsection{Computational Cost}\label{sss:comp-cost}
In the \textit{Online} phase of LEP-CNN, the IoT device offloads compute-intensive convolutional layers and fully-connected layers to the edge devices. Given a general convolutional layer, the IoT device only needs to perform $D$ matrix addition with $Dn^2$ FLOPs for encryption and $H(\frac{n-k+2p}{s}+1)^2$ FLOPs for decryption respectively. Compared with executing the same convolutional layer fully on the IoT device, which takes $2DHk^2(\frac{n-k+2p}{s}+1)^2$ FLOPs, LEP-CNN significantly reduces real-time computation on the IoT device. It is worth to note that the stride $s$ in a convolutional layer is typically a small value (e.g., 1 or 2). For a general fully-connected layer, the IoT device needs to perform $m$ FLOPs for encryption and $T$ FLOPs for decryption as shown in Eq.\ref{e:ppfl-enc} and Eq.\ref{e:ppfl-dec} respectively. Differently, if the IoT device executes such a fully-connected layer at local, $2mT$ FLOPs are needed. 

Besides the offloading of convolutional layers and fully-connected layers, the IoT device also needs to process non-linear layers at local. Fortunately, these non-linear layers are extremely compute-efficient. Taking the widely adopted activation layer - ReLU as an example, it only requires $\frac{1}{2Dk^2}$ of its previous convolutional layer's cost, and $\frac{1}{2m}$ of the cost if the previous layer is a fully-connected layer.

We now discuss the computational cost of LEP-CNN using AlexNet. As shown in Table \ref{t:AlexNet}, LEP-CNN can offload over 99.9\% computational cost for convolutional layers and fully-connected layers, and only leaves lightweight encryption and decryption on the IoT device.  Compared with the offloaded convolutional layers and fully-connected layers, the local execution of all non-linear layers only requires $0.08\%$ operations for AlexNet. This result further affirms our motivation to offload convolutional layers and fully-connected layers. 

With regards to the encrypted execution on the edge device, LEP-CNN achieves the same computational cost as that directly using unencrypted data as shown in Table \ref{t:na-summ}. This is because our encryption (Eq.\ref{e:ppcl-enc} and Eq.\ref{e:ppfl-enc}) in \textit{PPCL} and \textit{PPFL} schemes make the ciphertexts $Enc(\mathcal{I}_d)$ and $Enc(\mathcal{V})$ remain the same dimension as their plaintexts $\mathcal{I}_d$ and $\mathcal{V}$. Such a decent property guarantees real-time computational performance on the edge device.

In the \textit{Offline} phase, the IoT device owner first prepares encryption keys by choosing random matrices for convolutional layers and fully-connected layers that will be offloaded. Meanwhile, the owner will take these encryption keys as inputs for their corresponding convolutional layers or fully-connected layers to obtain results as the decryption keys. In Section \ref{s:evaluation}, we show that the offline phase can be efficiently executed using a regular laptop.

\subsubsection{Communication Cost}
The communication cost of LEP-CNN comes from the transmission of encrypted inputs and outputs of convolutional layers and fully-connected layers. In our implementation, we use 160-bit random numbers (i.e., $\lambda=160$) during all encryption processes in Eq.\ref{e:ppcl-enc} and Eq.\ref{e:ppfl-enc}. Thus, each element in the ciphertext (a matrix or a vector) is 20-Byte long. To offload a convolutional layer with a $n\times n \times D$ input, the IoT device first sends its corresponding ciphertext contains  $D$ encrypted matrices with $Dn^2$ elements in total. Then, $H$ encrypted result matrices are received from the edge device with each size of $(\frac{n-k+2p}{s}+1)\times (\frac{n-k+2p}{s}+1)$. With regards to the offloading of a fully-connected layer that takes a $m$-dimensional vector as input, the IoT device needs to send a $m$-dimensional vector as encrypted input and receive a $T$-dimensional vector as encrypted output from the edge device. 

As shown in Table \ref{t:AlexNet}, the communication cost for an offloading of the AlexNet is 20.49MB, which can be efficiently handled under the edge computing environment \cite{5G-MEC}.

\begin{table*}
\caption{Experimental Evaluation Results on AlexNet}\label{t:AlexNet-IoT-Comparison}
\vspace{-0.2cm}
\begin{center}
\begin{tabular}{ |c |c |c |c |c |c |c|}
  \hline
                      &\textbf{IoT without}&\multicolumn{5}{c|}{\textbf{LEP-CNN}}\\ \cline{3-7}
                     &\textbf{Offloading}&\textbf{IoT Computation}&\textbf{Edge Computation}&\textbf{Communication}&\textbf{Total}&\multirow{2}{*}{\textbf{Speedup}} \\    
                     & \textbf{(second)}        & \textbf{(second)}             & \textbf{(second)}& \textbf{(second)} &\textbf{(second)}& \\ \hline
  \textbf{Conv-1}    & 10.01      & 0.037             & {0.0103}             & 0.849          & {0.896}           & 11.17$\times$    \\ \hline
  \textbf{Conv-2}    & 40.68      & 0.0405            & 0.0435            & 0.489          & {0.573}          & {70.99}$\times$    \\ \hline
  \textbf{Conv-3}    & 19.93      & 0.0437            & 0.013             & 0.206          & {0.263}           & {75.78}$\times$    \\ \hline
  \textbf{Conv-4}    & 29.78      & 0.0498            & 0.0184            & 0.248          & {0.316}           & {94.24}$\times$    \\ \hline
  \textbf{Conv-5}    & 19.88      & 0.0420            & {0.0127}            & 0.206          & {0.261}           & {76.17}$\times$    \\ \hline
  \textbf{FC-1}      & 2.22       & 0.0013            & 0.0043            & 0.025          & {0.031}          & {71.61}$\times$    \\ \hline
  \textbf{FC-2}      & 1.08       & 0.001             & {0.0025}            & 0.016          & {0.019}          & {56.84}$\times$    \\ \hline
  \textbf{FC-3}      & 0.27       & 0.0008            & {0.0009}            & 0.01          & {0.012}          & {22.5}$\times$    \\ \hline
  \textbf{Non-linear}& 1.137      & 1.137             & N/A               & N/A            & 1.137          & N/A              \\ \hline
  \textbf{Total Cost}& 124.99     & 1.353             & {0.106}             & 2.049          & {3.508}           & {35.63}$\times$    \\ \hline

\end{tabular}
\end{center}
\vspace{-6mm}
\end{table*}

\begin{table*}
\caption{Experimental Evaluation Results on Integrity Check}\label{t:experiment-integrity-check}
\vspace{-0.2cm}
\begin{center}
\begin{tabular}{ |c |c |c |c |c |c |}
  \hline
                      &\textbf{IoT without} &\multicolumn{4}{c|}{\textbf{LEP-CNN}}\\ \cline{3-6}
                      &\textbf{Offloading}  &\textbf{Integrity Check} &\multirow{2}{*}{\textbf{Speedup}} &\textbf{No Integrity Check}&\multirow{2}{*}{\textbf{Speedup}} \\    
                      & \textbf{(second)}   & \textbf{(second)}      &                  &\textbf{(second)}  & \\ \hline
  \textbf{Conv-1}    & 10.01      & 0.916             & 10.93$\times$                & {0.896}           & 11.17$\times$    \\ \hline
  \textbf{Conv-2}    & 40.68      & 0.655             & 62.11$\times$              & {0.573}          & {70.99}$\times$    \\ \hline
  \textbf{Conv-3}    & 19.93      & 0.402             & 49.58$\times$              & {0.263}           & {75.78}$\times$    \\ \hline
  \textbf{Conv-4}    & 29.78      & 0.524             & 56.83$\times$               & {0.316}           & {94.24}$\times$    \\ \hline
  \textbf{Conv-5}    & 19.88      & 0.470             & 42.30$\times$                 & {0.261}           & {76.17}$\times$    \\ \hline
  \textbf{FC-1}      & 2.22       & 0.031             & 71.61$\times$               & {0.031}          & {71.61}$\times$    \\ \hline
  \textbf{FC-2}      & 1.08       & 0.019             & 56.84$\times$                 & {0.019}          & {56.84}$\times$    \\ \hline
  \textbf{FC-3}      & 0.27       & 0.012             & 22.5 $\times$                 & {0.012}          & {22.5}$\times$    \\ \hline
  \textbf{Non-linear}& 1.137      & 1.137             & N/A                         & 1.137          & N/A              \\ \hline
  \textbf{Total Cost}& 124.99     & 4.166             & 30.00$\times$                & {3.508}           & {35.63}$\times$    \\ \hline

\end{tabular}
\end{center}
\vspace{-6mm}
\end{table*}

\subsubsection{Storage Overhead}\label{ss:storage}
For the offloading of a convolutional layer with a $n\times n \times D$ input, the IoT device needs to store $D$ random matrices with $n^2$ elements each as the encryption keys, and $H$ matrices with size of $(\frac{n-k+2p}{s}+1)\times (\frac{n-k+2p}{s}+1)$ as the decryption keys. To offload a fully connected layer with a $m$-dimensional vector as input, a $m$-dimensional vector and a $T$-dimensional vector need to be pre-stored as the encryption key and decryption key respectively. 



Table \ref{t:AlexNet} shows the offloading of an AlexNet request needs 20.49MB storage overhead. With the rise of IoT devices, low-power-consumption SD memory card has become an excellent fit to economically extend the storage of IoT devices \cite{SD-IoT}, which usually have more than 32GB capacity. 


\subsubsection{Additional Resource Consumption of Integrity Check}\label{ss:analysis-of-integrity-check}
Turning on the integrity check leads to additional resource consumption to local IoT device. As shown in Table \ref{t:Numerical-Analysis-Integrity-Check}, given a returned matrix of size $H(\frac{n-k+2p}{s}+1)^2$ and a sample rate of $r$, the validation process in Section \ref{ss:integrity-check} brings $\lceil rH(\frac{n-k+2p}{s}+1)^2 \rceil$ additional computation and makes the total computational cost of IoT devices rise to $2Dk^2\lceil rH(\frac{n-k+2p}{s}+1)^2 \rceil$. Since any convolutional result in the entire set of response map can be incorrect, IoT devices need to store all kernel parameters of each convolutional layer locally, which adds on $Hk^2$ storage overhead and makes the total IoT storage overhead to be $Dn^2 + H(\frac{n-k+2p}{s}+1)^2 + Hk^2$.


Table \ref{t:comparison-integrity-check} shows the resource consumption comparison between LEP-CNN with integrity check turned on and turned off. The results are calculated when error rate $\theta = 1\%$ and sample rate $r = 0.2\%, 0.3\%, 0.8\%, 0.8\%, 1.1\%$ for Conv-1 - Conv-5 respectively. Under this setting, IoT device can achieve $99\%+$ error detection rate in each convolutional layer. Since all the multiplication results of $r\theta$ are less equal to $1.1\times10^{-4}$, the additional communication costs resulted from integrity check are tiny. As a result, the communication increments are less than $4.74\times10^{-3}\%$ of the original communication costs. Compared with the heavy parameters in fully-connected layers, the parameters in convolutional layers only stand for a minor portion of the entire neural network model. Thus, even the highest additional storage overhead is only 227 KB while the lowest increment can be as low as 45 KB.

\section{Prototype Evaluation}\label{s:evaluation}
We implemented a prototype of LEP-CNN using Python 2.7. In our implementation, TensorFlow and Keras libraries are adopted to support CNNs. The resource-constrained IoT device is a Raspberry Pi (Model A) with Raspbian Debian 7, which has 700 MHz single-core processor, and 256MB memory, and 32GB SD card storage. The edge device and the IoT device owner is a Macbook Pro laptop with OS X 10.13.3, 3.1 GHz Intel Core i7 processor, 16GB memory, and 512GB SSD. The IoT device and the edge device are connected using WiFi in the same subnet. We use the well-known ImageNet \cite{imagenet_cvpr09} as the dataset for the evaluation of AlexNet. The security parameter $\lambda$ is set as 160 in our implementation. We also implemented an AlexNet-structured CryptoNets \cite{CryptoNets} as an example to compare our scheme with homomorphic encryption based privacy preserving neural networks \cite{CryptoNets,CryptoDL,IACR-PPDL}.



\subsection{Evaluation Results}
\textbf{\textit{Efficiency}}: In this section, we first evaluate the efficiency of offline phase of LEP-CNN, and then discuss the online phase for CNN execution efficiency. 

To generate the encryption and decryption keys for the execution of one AlexNet request, LEP-CNN only requires 114ms for the IoT device owner. While each set of keys will only be used for one request, the owner can efficiently compute more than 2600 sets of keys for AlexNet using 5 minutes. 

In the online phase, the IoT device in LEP-CNN efficiently executes a CNN request with privacy-preserving offloading to the edge device. Table \ref{t:AlexNet-IoT-Comparison} summarizes the evaluation results of LEP-CNN on AlexNet. By applying LEP-CNN, the required computational time on the IoT device is reduced to about $\frac{1}{92}$ for AlexNet. With such a high computational reduction, we not only overcome the challenges from limited computational resources of IoT devices, but also save energy consumption for them to achieve longer battery life. The other part of computational cost of LEP-CNN is from the privacy-preserving execution of convolutional layers and fully-connected layers on the edge device. As shown in the fourth column of Table \ref{t:AlexNet-IoT-Comparison}, the edge device served by a laptop can efficiently handle these operations using encrypted data. In practice, the selection of layers to offload in CNNs can be customized according to their complexity, since our \textit{PPCL} and \textit{PPFL} schemes are designed as independent modules for flexible combination. We also compare the privacy-preserving execution of CNN layers on the edge device with that using unencrypted data. Table \ref{t:comp-no-privacy} shows that the encryption execution on the edge device using our scheme spends almost the same time as executing these layers without privacy protection. This is also consistent with our numerical analysis in Section \ref{sss:comp-cost}, since our ciphertext has the same dimensions as its corresponding plaintext. 

As our scheme requires the interaction between the IoT device and the edge device during the execution of a CNN request, another part of cost of LEP-CNN is the communication cost. In our implementation, we use a wireless network with 10MB/s transmission speed between the IoT device and the edge device. In real-world scenario, the devices are likely to be connected via wired or cellular connection, which allows a higher transmission speed than our experimental environment. As presented in the fifth column of Table \ref{t:AlexNet-IoT-Comparison}, the total communication time for AlexNet in our network environment is only about $\frac{1}{61}$ compared with processing the entire AlexNet on the IoT device without LEP-CNN. Moreover, the upcoming 5G era for MEC environment will significantly empower the transmission speed \cite{5G-MEC} and further optimize the communication performance our scheme.

We now compare the total cost of LEP-CNN with directly executing CNN on the IoT device. As shown in Table \ref{t:AlexNet-IoT-Comparison}, LEP-CNN can speed up the execution of an AlexNet request for 35.63$\times$. Among convolutional layers and fully-connected layers in AlexNet, LEP-CNN can speed up the execution for over 90$\times$. In Table \ref{t:experiment-integrity-check}, when the integrity check is turned on, LEP-CNN can still achieve a high speedup rate of $30.00\times$ compared with AlexNet local execution. These results also validate the scalability of LEP-CNN. More to mention, with increasing complexity of convolutional layers and fully-connected layers, LEP-CNN retains or increases the high speedup rate as shown in the last column of Table \ref{t:AlexNet-IoT-Comparison} and \ref{t:experiment-integrity-check}. Taking AlexNet as example, the highest speedup rates for them are all achieved with these more complex layers. Therefore, LEP-CNN is promising to be scaled to support more complex CNN architectures according to practical requirements.

\begin{table}
\caption{Executing Each Layer of AlexNet using LEP-CNN and Non-privacy-preserving Approach on the Edge}\label{t:comp-no-privacy}
\vspace{-0.2cm}
\begin{center}
\begin{tabular}{ |c |c |c |}
\hline
  &\textbf{LEP-CNN}&\textbf{No Privacy Protection} \\ 
  &\textbf{(Second)} &\textbf{(Second)} \\ \hline
  \textbf{Conv-1}    & 0.014  & 0.012 \\ \hline
  \textbf{Conv-2}    & 0.0435 & 0.041 \\ \hline
  \textbf{Conv-3}    & 0.013  & 0.013 \\ \hline
  \textbf{Conv-4}    & 0.0184 & 0.016 \\ \hline
  \textbf{Conv-5}    & 0.012  & 0.012 \\ \hline
  \textbf{FC-1}      & 0.0043 & 0.004 \\ \hline
  \textbf{FC-2}      & 0.0065 & 0.0063 \\ \hline
  \textbf{FC-3}      & 0.0022 & 0.002 \\ \hline
  \textbf{Total Cost}& 0.123  & 0.106 \\ \hline

\end{tabular}
\vspace{-0.5cm}
\end{center}
\end{table}

\textbf{\textit{Energy Consumption}}: Compared with fully executing AlexNet inference task on the IoT device with high energy consumption, LEP-CNN significantly saves the energy consumption for computation of the IoT device while introducing slight extra energy consumption for communication. In our evaluation, the IoT device (Raspberry Pi Model A) is powered by a 5V micro-USB adapter. The voltage and current is measured using a Powerjive USB multimeter \cite{powerjive}. Table \ref{t:energy-consumption} shows the average IoT power consumption under different IoT device status. We observe that the network connection is a major power cost in IoT device. An idle IoT device with network connection can have a comparable power cost as executing AlexNet locally without network connection. In our measurement, the average active current consumption for the IoT device is $162mA$, which indicates at least 101.24J energy consumption when fully executing an inference task on the IoT device with 124.99 seconds as stated in Table \ref{t:energy-consumption}. Differently, LEP-CNN reduces the computation on the IoT device to 1.353 seconds (1.59J energy consumption) with 2.049 seconds extra communication (2.90J energy consumption). Therefore, LEP-CNN can save IoT energy consumption by $\frac{101.19-(1.59+2.90)}{101.19}=95.56\%$.


\begin{table*}
\caption{Power and Energy Consumption Evaluation}\label{t:energy-consumption}
\vspace{-0.2cm}
\begin{center}
\begin{tabular}{ |c |c |c |c |c |}
  \hline
                &\textbf{IoT without}&\multicolumn{3}{c|}{\textbf{LEP-CNN}}\\ \cline{3-5}
\multirow{2}{*}{}&\textbf{Offloading \& }&\textbf{Idle IoT with}&\textbf{IoT}&\textbf{IoT}\\    
                & \textbf{Network Connection}   & \textbf{Network Connection}   & \textbf{Computation}& \textbf{Communication}\\ \hline
  \textbf{Power (W)}      & 0.81        & 0.78              & 1.17               & 1.42                       \\ \hline
  \textbf{Energy (J)}     & 101.19      & N/A               & 1.59               & 2.90                       \\ \hline          

\end{tabular}
\end{center}
\vspace{-6mm}
\end{table*}

\textbf{\textit{Accuracy}}: To validate that there is no accuracy loss in LEP-CNN, we also implemented original AlexNet without any encryption. By using the same parameters, LEP-CNN achieves the exact same accuracy ($80.1\%$) as that obtained using original AlexNet \cite{AlexNet} without any encryption, because there is no approximation design in LEP-CNN.

\textbf{\textit{Evaluation of Sample Rate $r$}}: In order to achieve a high error detection rate, different sample rate $r$ needs to be calculated based on different settings in each convolutional layer. As shown in Figure \ref{f:sample-rate-error-detection-rate}, to make the error detection rate to surpass $99\%$, Conv-1 - Conv-5 need to set $r$ to be $0.2\%, 0.3\%, 0.8\%, 0.8\%, 1.1\%$ respectively. Figure \ref{f:sample-rate-returned-data-size} shows that as the size of the returned data rises, the sample rate $r$ required to reach $99\%+$ error detection rate drops correspondingly. From this observation combined with Figure \ref{f:sample-rate-offloaded-computation-percentage}, the scalability of the integrity check feature is ensured and the additional resource consumption of a larger, more complex CNN is always minor compared with its original costs.

\begin{figure}[ht]
\centering
\includegraphics[width=8cm]{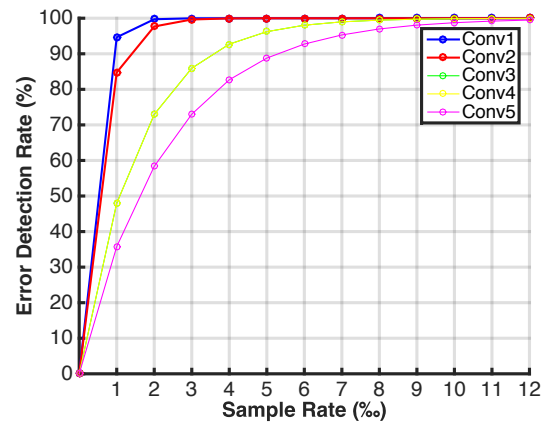}
\centering
\caption{Evaluation of Sample Rate $r$ and Error Detection Rate} \label{f:sample-rate-error-detection-rate}
\end{figure} 

\begin{figure}[ht]
\centering
\includegraphics[width=8cm]{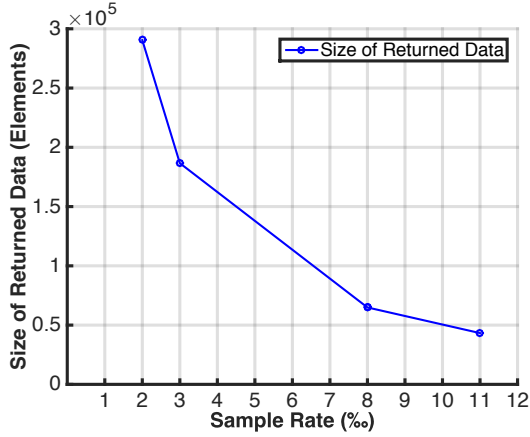}
\centering
\caption{Evaluation of Sample Rate $r$ and Returned Data Size} \label{f:sample-rate-returned-data-size}
\end{figure}

\begin{figure}[ht]
\centering
\includegraphics[width=8cm]{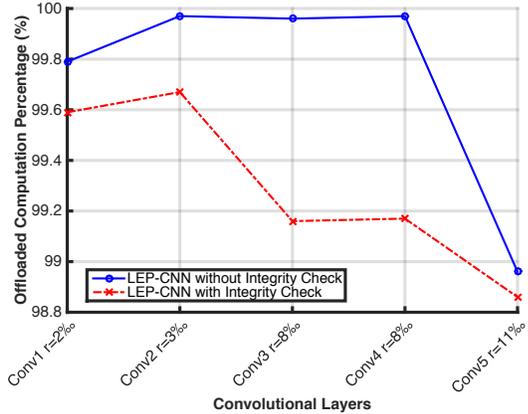}
\centering
\caption{Evaluation of Convolutional Layers and Offloaded Computation Percentage} \label{f:sample-rate-offloaded-computation-percentage}
\end{figure}

\begin{table}
\caption{Comparison Between LEP-CNN and CryptonNets in First Convolutional Layer of AlexNet}\label{t:cryptonets-comparison}
\vspace{-0.2cm}
\begin{center}
\begin{tabular}{ |c |c |c |}
\hline
                        &\textbf{LEP-CNN}     &\textbf{AlexNet Scale CryptoNets}   \\
                        &\textbf{(seconds)}   &\textbf{(seconds)}                  \\ \hline
  \textbf{Encryption}   & 0.012               & 0.360                              \\ \hline
  \textbf{Convolution}  & 0.0103              & 625.856                            \\ \hline
  \textbf{Decryption}   & 0.025               & 0.271                              \\ \hline

\end{tabular}
\vspace{-0.5cm}
\end{center}
\end{table}


\textbf{\textit{Comparison with CryptoNets}}: To compare LEP-CNN with homomorphic encryption based privacy preserving neural networks \cite{CryptoNets,CryptoDL,IACR-PPDL} under the same scale, we implemented AlexNet in CryptoNets version, denoted as A-CryptoNets, using the same network settings as in \cite{AlexNet} and same YASHE cryptosystem \cite{YASHE} as in \cite{CryptoNets}. Table \ref{t:cryptonets-comparison} shows the efficiency of the first convolutional layer in LEP-CNN and A-CryptoNets. Due to the large input size of AlexNet, the polynomials in A-CryptoNets needs to be big enough to prevent multiplication overflow. These big polynomials make the first convolutional operation cost A-CryptoNets over 10 minutes, which is even more than the total time cost of walking through each layer in LEP-CNN. Although processing requests in batches can help relieve the high computation cost of each inference request in A-CryptoNets, additional time cost is introduced to collect and form single requests into a batch. Thus, in time-sensitive scenarios, LEP-CNN has a better performance than homomorphic encryption based solutions. 


\section{Related Work}\label{s:related-work}

To enable the offloading of the CNN in a privacy-preserving manner, Gilad-Bachrach et al. \cite{CryptoNets} propose a CryptoNets using ``YASHE'' homomorphic encryption. CryptoNets allows cloud computing servers to perform the inference stage of a CNN using encrypted data only. Meanwhile, the activation function in the activation layers is replaced with the squared function to integrate homomorphic encryption. However, high computational and communication cost are introduced in CryptoNets due to the utilization of homomorphic encryption, and thus making them inefficient for time-sensitive application. In addition, the encryption cost for each request using homomorphic encryption is also expensive for resource-constrained IoT devices. Another limitation of CryptoNets is that its effectiveness can only be guaranteed for small number of activation layers as indicated in ref \cite{IACR-PPDL}. As a result, only small scale CNN architectures can be supported. To improve CryptoNets, Chabanne et al. \cite{IACR-PPDL} utilize low-degree polynomials to approximate activation layers. In addition, a normalize layer is added before the non-linear layer with batch normalization, with which polynomial approximation only needs to be accurate on a small and fixed interval. Nevertheless, ref \cite{IACR-PPDL} follows the same homomorphic encryption-based design for privacy protection as that in CryptoNets, and also suffers from the high computational and communication cost. Recently, ref \cite{CryptoDL} introduces CryptoDL that enhances CryptoNets in terms of efficiency and accuracy. In CryptoDL, low-degree polynomial-based approximation designs are proposed to support commonly used activation functions (i.e. ReLU, Sigmoid, and Tanh) in activation layers. While CryptoDL reduces about $50\%$ run time compared with CryptoNets, it still requires high local encryption cost on the IoT device. Furthermore, time-sensitive tasks require IoT devices to process data on-the-fly. Unfortunately, these existing research \cite{CryptoNets,IACR-PPDL,CryptoDL} are more suitable for the ``Data Collection and Post-Processing'' routine, since they require the batch processing of a large number of requests to improve efficiency. 




\section{Conclusion}\label{s:conclusion}
In this paper, we proposed LEP-CNN that enables resource constrained IoT devices to efficiently execute CNN requests with privacy protection. LEP-CNN uniquely designs a lightweight online/offline encryption scheme. By discovering the fact that linear operations in CNNs over input and random noise can be separated, LEP-CNN can pre-compute decryption keys to remove random noise and thus boosting the performance of real-time CNN requests. By integrating local edge devices, LEP-CNN ameliorates the network latency and service availability issue. LEP-CNN also makes the privacy-preserving operation on the edge device as efficient as that on unencrypted data. Moreover, the privacy protection in LEP-CNN does not introduce any accuracy loss to the CNN inference. LEP-CNN also provides optional integrity check functionality to help IoT devices detect erroneous results from dishonest edge devices. Thorough security analysis is provided to show that LEP-CNN is secure in the defined threat model. Extensive numerical analysis as well as prototype implementation over the well-known CNN architectures and datasets demonstrate the practical performance of LEP-CNN. Our experimental results also depict that LEP-CNN prevails in terms of accuracy and efficiency under time-sensitive scenarios compared with homomorphic encryption based offloading solutions.



\bibliographystyle{unsrt}
\bibliography{CNS-19-LEPCNN-Journal}

\begin{thebibliography}{10}

\bibitem{dl-iot-1}
M.~Verhelst and B.~Moons.
\newblock Embedded deep neural network processing: Algorithmic and processor
  techniques bring deep learning to iot and edge devices.
\newblock {\em IEEE Solid-State Circuits Magazine}, 9(4):55--65, Fall 2017.

\bibitem{dl-iot-2}
S.~Kodali, P.~Hansen, N.~Mulholland, P.~Whatmough, D.~Brooks, and G.~Y. Wei.
\newblock Applications of deep neural networks for ultra low power iot.
\newblock In {\em 2017 IEEE International Conference on Computer Design
  (ICCD)}, pages 589--592, Nov 2017.

\bibitem{dl-iot-5}
Matt Burns.
\newblock {Arm chips with Nvidia AI could change the Internet of Things}.
\newblock
  \url{https://techcrunch.com/2018/03/27/arm-chips-will-with-nvidia-ai-could-change-the-internet-of-things/},
  2018.
\newblock [Online; accessed July-2018].

\bibitem{dl-iot-survey}
Mohammadi Mehdi, Al-Fuqaha Ala, Sorour Sameh, and Guizani Mohsen.
\newblock Deep learning for iot big data and streaming analytics: A survey.
\newblock {\em arXiv:1712.04301}, 2017.

\bibitem{AlexNet}
Alex Krizhevsky, Ilya Sutskever, and Geoffrey~E. Hinton.
\newblock Imagenet classification with deep convolutional neural networks.
\newblock In {\em Proceedings of the 25th International Conference on Neural
  Information Processing Systems - Volume 1}, NIPS'12, pages 1097--1105, USA,
  2012. Curran Associates Inc.

\bibitem{FaceNet}
F.~Schroff, D.~Kalenichenko, and J.~Philbin.
\newblock Facenet: A unified embedding for face recognition and clustering.
\newblock In {\em 2015 IEEE Conference on Computer Vision and Pattern
  Recognition (CVPR)}, pages 815--823, June 2015.

\bibitem{ResNet}
K.~He, X.~Zhang, S.~Ren, and J.~Sun.
\newblock Deep residual learning for image recognition.
\newblock In {\em 2016 IEEE Conference on Computer Vision and Pattern
  Recognition (CVPR)}, pages 770--778, June 2016.

\bibitem{iot-cloud-privacy}
J.~Zhou, Z.~Cao, X.~Dong, and A.~V. Vasilakos.
\newblock Security and privacy for cloud-based iot: Challenges.
\newblock {\em IEEE Communications Magazine}, 55(1):26--33, January 2017.

\bibitem{edge-computing}
W.~Shi, J.~Cao, Q.~Zhang, Y.~Li, and L.~Xu.
\newblock Edge computing: Vision and challenges.
\newblock {\em IEEE Internet of Things Journal}, 3(5):637--646, Oct 2016.

\bibitem{CryptoNets}
Ran Gilad{-}Bachrach, Nathan Dowlin, Kim Laine, Kristin~E. Lauter, Michael
  Naehrig, and John Wernsing.
\newblock Cryptonets: Applying neural networks to encrypted data with high
  throughput and accuracy.
\newblock In {\em Proceedings of the 33nd International Conference on Machine
  Learning, {ICML} 2016, New York City, NY, USA, June 19-24}, pages 201--210,
  2016.

\bibitem{IACR-PPDL}
Herv{\'e} Chabanne, Amaury de~Wargny, Jonathan Milgram, Constance Morel, and
  Emmanuel Prouff.
\newblock Privacy-preserving classification on deep neural network.
\newblock {\em IACR Cryptology ePrint Archive}, 2017:35, 2017.

\bibitem{CryptoDL}
Ehsan Hesamifard, Hassan Takabi, and Mehdi Ghasemi.
\newblock Cryptodl: Deep neural networks over encrypted data.
\newblock {\em CoRR}, abs/1711.05189, 2017.

\bibitem{paillier-benchmark}
Cornejo Mario and Poumeyrol Mathieu.
\newblock {Benchmarking Paillier Encryption}.
\newblock
  \url{https://medium.com/snips-ai/benchmarking-paillier-encryption-15631a0b5ad8},
  2018.
\newblock [Online; accessed July-2018].

\bibitem{CCS16}
Martin Abadi, Andy Chu, Ian Goodfellow, H.~Brendan McMahan, Ilya Mironov, Kunal
  Talwar, and Li~Zhang.
\newblock Deep learning with differential privacy.
\newblock In {\em Proceedings of the 2016 ACM SIGSAC Conference on Computer and
  Communications Security}, CCS '16, pages 308--318, New York, NY, USA, 2016.
  ACM.

\bibitem{ICDM17-PHAN}
Phan NhatHai, Wu~Xintao, Hu~Han, and Dou Dejing.
\newblock Adaptive laplace mechanism: Differential privacy preservation in deep
  learning.
\newblock In {\em Proceedings of the 2017 IEEE International Conference on Data
  Mining}, ICDM '17. IEEE, 2017.

\bibitem{edge-iot-2}
Q.~Zhang, Q.~Zhang, W.~Shi, and H.~Zhong.
\newblock Firework: Data processing and sharing for hybrid cloud-edge
  analytics.
\newblock {\em IEEE Transactions on Parallel and Distributed Systems},
  PP(99):1--1, 2018.

\bibitem{edge-iot-3}
G.~Ananthanarayanan, P.~Bahl, P.~Bodík, K.~Chintalapudi, M.~Philipose,
  L.~Ravindranath, and S.~Sinha.
\newblock Real-time video analytics: The killer app for edge computing.
\newblock {\em Computer}, 50(10):58--67, 2017.

\bibitem{5G-MEC}
B.~P. Rimal, D.~P. Van, and M.~Maier.
\newblock Mobile edge computing empowered fiber-wireless access networks in the
  5g era.
\newblock {\em IEEE Communications Magazine}, 55(2):192--200, February 2017.

\bibitem{iot-survey}
W.~Yu, F.~Liang, X.~He, W.~G. Hatcher, C.~Lu, J.~Lin, and X.~Yang.
\newblock A survey on the edge computing for the internet of things.
\newblock {\em IEEE Access}, 6:6900--6919, 2018.

\bibitem{kandukuri2009cloud}
Balachandra~Reddy Kandukuri, Atanu Rakshit, et~al.
\newblock Cloud security issues.
\newblock In {\em Services Computing, 2009. SCC'09. IEEE International
  Conference on Services Computing}, pages 517--520. IEEE, 2009.

\bibitem{imagenet_cvpr09}
J.~Deng, W.~Dong, R.~Socher, L.-J. Li, K.~Li, and L.~Fei-Fei.
\newblock {ImageNet: A Large-Scale Hierarchical Image Database}.
\newblock In {\em CVPR09}, 2009.

\bibitem{cnn-wiki}
{Wikipedia}.
\newblock {Convolutional neural network }.
\newblock \url{https://en.wikipedia.org/wiki/Convolutional_neural_network}.
\newblock [Online; accessed July-2018].

\bibitem{Cong2014}
Jason Cong and Bingjun Xiao.
\newblock {\em Minimizing Computation in Convolutional Neural Networks}, pages
  281--290.
\newblock Springer International Publishing, Cham, 2014.

\bibitem{output-inference1}
Aravindh Mahendran and Andrea Vedaldi.
\newblock Understanding deep image representations by inverting them.
\newblock {\em CoRR}, abs/1412.0035, 2014.

\bibitem{Crpto-book-cp3.3}
Jonathan Katz and Yehuda Lindell.
\newblock {\em Chapter 3.3, Introduction to Modern Cryptography}.
\newblock Chapman \& Hall/CRC, 2007.

\bibitem{uav-cloud}
J.~Lee, J.~Wang, D.~Crandall, S.~Šabanović, and G.~Fox.
\newblock Real-time, cloud-based object detection for unmanned aerial vehicles.
\newblock In {\em 2017 First IEEE International Conference on Robotic Computing
  (IRC)}, pages 36--43, April 2017.

\bibitem{uav-battery}
{Airdata UAV}.
\newblock {Drone Flight Stats}.
\newblock \url{https://airdata.com/blog/2017/drone-flight-stats-part-1}, 2018.
\newblock [Online; accessed July-2018].

\bibitem{Crpto-book-cp11}
Jonathan Katz and Yehuda Lindell.
\newblock {\em Chapter 11, Introduction to Modern Cryptography}.
\newblock Chapman \& Hall/CRC, 2007.

\bibitem{SD-IoT}
{Paul, Norbury}.
\newblock {Now Trending: SD Memory Cards}.
\newblock
  \url{https://www.sdcard.org/press/thoughtleadership/180118_Now_Trending_SD_Memory_Cards.html},
  2018.
\newblock [Online; accessed July-2018].

\bibitem{powerjive}
{Raspberry Pi Dramble}.
\newblock {Power Consumption Benchmarks}.
\newblock \url{http://www.pidramble.com/wiki/benchmarks/power-consumption},
  2018.
\newblock [Online; accessed October-2018].

\bibitem{YASHE}
Joppe~W. Bos, Kristin Lauter, Jake Loftus, and Michael Naehrig.
\newblock Improved security for a ring-based fully homomorphic encryption
  scheme.
\newblock In Martijn Stam, editor, {\em Cryptography and Coding}, pages 45--64,
  Berlin, Heidelberg, 2013. Springer Berlin Heidelberg.

\end{thebibliography}

\end{document}